\newtheorem{prop}{Proposition}
\newcommand{\ie}{\textit{i.e.}}
\begin{document}
\title{Monogamy relations of entropic non-contextual inequalities and 
their experimental realization}
\author{Dileep Singh}
\email{dileepsingh@iisermohali.ac.in}
\affiliation{Department of Physical Sciences, Indian
Institute of Science Education \& 
Research Mohali, Sector 81 SAS Nagar, 
Manauli PO 140306 Punjab, India.}
\author{Jaskaran Singh}
\email{jsinghiiser@gmail.com}
\affiliation{Departamento de F\'{\i}sica Aplicada II, Universidad de Sevilla, E-41012 Sevilla, Spain}
\author{Kavita Dorai}
\email{kavita@iisermohali.ac.in}
\affiliation{Department of Physical Sciences, Indian
Institute of Science Education \& 
Research Mohali, Sector 81 SAS Nagar, 
Manauli PO 140306 Punjab, India.}
\author{Arvind}
\email{arvind@iisermohali.ac.in}
\affiliation{Department of Physical Sciences, Indian
Institute of Science Education \& 
Research Mohali, Sector 81 SAS Nagar, 
Manauli PO 140306 Punjab, India.}
\affiliation{Vice Chancellor, Punjabi University Patiala,
147002, Punjab, India}
\begin{abstract}
We develop a theoretical framework based on a
graph theoretic approach to analyze monogamous relationships
of entropic non-contextuality (ENC) inequalities.  While ENC
inequalities are important in quantum information
theory and are well studied, theoretical as well as
experimental demonstration of their monogamous nature is
still elusive.  We provide conditions for ENC inequalities
to exhibit a monogamous relationship and derive the same for
general scenarios.  We show that two entropic versions of
the Bell-CHSH inequality acting on a tripartite scenario
exhibit a monogamous relationship, for which we provide a
theoretical proof as well as an experimental validation on
an NMR quantum information processor.  
Our experimental technique to evaluate entropies has
been designed to obtain information about entropies via
measurement of only the expectation values of observables.
\end{abstract} 
\maketitle 

\section{Introduction}
\label{sec:intro}
It is well known that measurements on quantum states exhibit
correlations which defy understanding based on classical theories. The outcomes of
measurements performed on spatially separated or even on
single indivisible systems may exhibit correlations greater
than their apparent classical values. These correlations are
termed as Bell
non-local~\cite{bell_non_locality,chsh,bell_hidd_var,bell_high_dim}
or
contextual~\cite{ks_theorem,acin2015,nc_n_cycle,e_principle_corr,graph_cabello}
for bipartite space-like separated and for single
indivisible systems respectively, and are studied by means of
inequalities, a violation of which implies non-classical
behavior~\cite{bell_test2,bell_test4,bell_test5,context_test1,context_test2}.
The most widely studied of these inequalities include the
Bell-CHSH~\cite{chsh} and the KCBS inequality~\cite{kcbs},
quantifying Bell non-local and contextual correlations, respectively.

Quantum correlations that
go beyond the classical allowed values have found immense
application in quantum information processing tasks,
including self-testing~\cite{self_test_pr,self_test2},
quantum key distribution
~\cite{key_dist_review1,key_dist_review2}, device
independent quantum key
distribution~\cite{Acin_sec,diqkd_vazirani} and randomness
certification~\cite{randomness_acin}, to name a few. A common
ground of most of these applications is that they utilize
the property of monogamy of correlations \cite{acin-prl-06,
pawloski-pra-10, saha-pra-17, li-osa-18}.  Such monogamy
relations dictate that when two parties are violating a
particular inequality (contextual or Bell type), then it is not
possible for one of the parties to violate a second
similar
inequality with a third
party on a
tripartite state shared amongst all three of them. These monogamy
relations have been well studied in the case of Bell
inequalities and
contextuality~\cite{pawlowski-prl-09,ramanathan-prl-12}
Interestingly, monogamous relations also exist between Bell
non-locality and contextuality and these have been verified
experimentally~\cite{kurzy-prl-14, zhan-prl-16,
saha-pra-17}.

Offering a different perspective on quantum correlations,
Braunstein and Caves introduced an information theoretic
approach to understand non-local
correlations~\cite{bell_info_caves,entropic_no_signalling}
studied in the form of inequalities. These inequalities are
eponymously termed as entropic inequalities.  In this
formalism, the joint Shannon entropies carried by
observables must satisfy an inequality in order to have a
local description of correlations. This makes the
inequalities non-linear functions of probabilities, unlike
standard Bell inequalities which are linear. These inequalities 
are one-way
statements;
only when an inequality is violated, can it be implied that
a joint probability distribution over the observables under
consideration cannot exist. However, if the inequality is
satisfied, no definite conclusion can be
drawn. Subsequently,
the work was extended to include non-contextual
scenarios~\cite{entropic_chaves,entropic_fritz} as well, and
the inequalities are generally termed as entropic
non-contextuality (ENC) inequalities. Since their inception,
these inequalities have been studied extensively and have
also been realized experimentally~\cite{entropic_test,
katiyar-pra-13, cao-sr-16, zhan-prl-17, qu-pra-20}.
Experimental implementations of state
dependent~\cite{zhan-prl-17} as well as state
independent~\cite{qu-pra-20} ENC inequalities have been
performed on optical systems. 

These ENC inequalities offer an added advantage of being
independent of the number of outcomes in a measurement,
unlike Bell scenarios where the number of outcomes of a
measurement play an important rule in determining the local value.
This property of ENC inequalities makes them suitable candidates for
applications in non-locality
distillation~\cite{entropic_fritz}  and bilocality
scenarios, among many other applications. 
Due to their non-linear nature, a
general description of monogamous relations for all possible
ENC scenarios is a nontrivial
task.  While there exist scenarios where it is possible to derive a monogamous
relationship~\cite{entropic_no_signalling}, a
generalization of this recipe would be too complex and
require involved calculations. There have been attempts in this direction~\cite{kurzy-pra-14} but based
on a different approach than ours.  Furthermore, while there are a few
experimental studies verifying monogamy of ENC inequalities,
all of them are based on optical methods, which are capable
of directly yielding probabilities of outcomes of a
measurement in the form of frequency of photon
detections.  However, to the best of our knowledge, there is
currently no experimental verification of monogamy of ENC
inequalities till date.

Providing a fresh perspective
in this paper, we describe an elegant theoretical
recipe to evaluate monogamous relations of ENC inequalities
and verify the results on an NMR quantum information
processor. We develop a
theoretical framework, based on graph theoretic formalism,
to study monogamous relations of ENC inequalities in
arbitrary no-signalling scenarios, apply our formalism to
the entropic Bell-CHSH scenario, and derive a
monogamous relation between three parties: Alice, Bob and
Charlie. We show that if Alice and Bob violate the
entropic Bell-CHSH inequality, then Alice and Charlie cannot
do so (and vice versa). An added advantage is that
 our technique also holds for
arbitrary $m$ ENC distributed over $n$ parties.

Next we turn to the experimental
realization  of our results by applying the entropic Bell-CHSH inequality to two different cases.
In the first case, all the parties share a mixed tripartite state, such that Alice and Bob
share a maximally entangled state with probability $p$, and Alice and Charlie share the same with probability $1-p$. In the second case, all the parties together share an entangled tripartite pure state. In both the cases, we observe that the monogamous relation is satisfied at all
times, while only one of the inequalities is capable of exhibiting a violation. It should be noted that experiments on an NMR quantum information processor only yield an expectation value of the observable, while probabilities of various outcomes cannot be addressed directly. This makes the task of evaluating entropic quantities on this system
difficult. To circumvent this difficulty, we provide a novel recipe to evaluate the entropy of an observable on an NMR quantum information processor.  In the experiment, all the pure and mixed states were prepared with an experimental fidelity
greater than $0.92$ and $0.96$ respectively, and the experimental results match well with
the theoretical predictions.

The paper is organized as follows: in
Sec.~\ref{ineq:sec} we begin by giving a brief description of 
entropic inequalities. In Sec.~\ref{sec:mono} we describe
our main theoretical result concerning monogamy of ENCs. In Sec.~\ref{subsec:mixed} we
provide an experimental demonstration using a mixed state, while in Sec.~\ref{subsec:pure} we
use a pure tripartite state for the experiment. Sec.~\ref{concl} 
contains some concluding remarks.

\section{Entropic inequalities and their monogamy}
\label{theory-results}
In this section 
we describe
our theoretical results, which will be used later in the paper.
We begin by providing
a brief review of ENC inequalities and focus
particularly on the entropic version of the Bell-CHSH inequality.  

\subsection{The entropic inequalities}
\label{ineq:sec}
An experiment corresponding to a contextuality inequality can be represented by a graph ~\cite{cabello-prl-14}. Mathematically, a graph $G$ is defined by a set of vertices $V$ and and a set of edges $E$, such that $G = (V, E)$.  In relation to the contextuality inequality, the set $V$ can represent either events, projectors or observables depending upon the scenario, while the set $E$ represents a relationship between different elements of $V$, such as orthogonality, exclusivity or commutativity~\cite{cabello-prl-14}.  In this paper we use commutativity graphs to study ENC inequalities, in which the set of vertices corresponds to observables and two vertices are connected by an edge if they commute with each other~\cite{kurzy-prl-12}.

Consider an $n$-cycle commutation graph in which the $n$
vertices represent observables $X_i$ and the existence of an edge
indicates that the corresponding observables commute.
An example of one such graph for $n = 4$ is given in Fig.~\ref{fig:entropy} (top panel). 

We assume that a
non-contextual joint probability distribution exists over
the entire set of observables considered, even though most
of them do not commute. It is our aim to construct a
condition based on the preceding assumption, for which a
violation would indicate that such a non-contextual joint
probability distribution does not exist.

The existence of a non-contextual joint probability
distribution over the observables $X_i$ implies that it is
possible to define a joint Shannon entropy
$H(X_0,...,X_{n-1})$ of them. We can then write
\begin{equation} 
H(X_0,X_{n-1})\leq H(X_0,...,X_{n-1}),
\label{eq:ent1} 
\end{equation}
 where the relationship
$H(X)\leq H(X,Y)$ is physically motivated by the fact that
two random variables cannot contain less information than a
single one of them. Furthermore, with repeated application
of the chain rule $H(X,Y)=H(X|Y)+H(Y)$, where $H(X|Y)$
denotes the conditional entropy of observable $X$ given
information about observable $Y$, the right hand side of the
inequality can be re-written as, \begin{widetext}
\begin{equation} \begin{aligned} H(X_0,...,X_{n-1})&\leq
H(X_0|X_1,...,X_{n-1})+H(X_1|X_2,...X_{n-1})+...+H(X_{n-2}|X_{n-1})+H(X_{n-1})\\
&\leq
H(X_0|X_1)+H(X_1|X_2)+...+H(X_{n-2}|X_{n-1})+H(X_{n-1}).
\label{eq:ent2} \end{aligned} \end{equation} \end{widetext}
The latter inequality in Eq.~\eqref{eq:ent2} is a
consequence of the relationship $H(X|Y)\leq H(X)$, which
implies that conditioning cannot increase the information
content of a random variable.  Plugging Eq.~\eqref{eq:ent2}
in Eq.~\eqref{eq:ent1} and using
$H(X_0|X_{n-1})=H(X_0,X_{n-1})-H(X_{n-1})$, we finally get
the required entropic non-contextuality inequality,
\begin{equation} H_{K_1}: H(X_0|X_{n-1})\leq
H(X_0|X_1)+...H(X_{n-2}|X_{n-1}).  \label{eq:ent_main}
\end{equation} A violation of Eq.~\eqref{eq:ent_main} would
then indicate that a non-contextual joint probability
distribution over the corresponding set of observables does
not exist.
	
It should be noted that no assumptions have been made
regarding the nature of the observables $X_i$. For all
intents and purposes they can correspond to projective
measurements or POVMs with any number of outcomes.
Furthermore, the observables could correspond to local
scenarios or non-local, in which case the entropic
inequality would be termed as non-contextual or Bell
non-local.  Since we consider generalized scenarios, we term
all entropic inequalities as non-contextual as it subsumes
the non-local scenarios as well.

One of the many interesting cases arises for $n = 4$
observables, which corresponds to the well known Bell-CHSH
scenario.  Consider two parties, Alice and Bob, each having
two observables labelled $\lbrace A_0, A_1\rbrace$ and
$\lbrace B_0, B_1\rbrace$ respectively, such that
observables of any one party do not commute, while
observables of different parties commute. Differing from the
traditional Bell-CHSH scenario, it is also assumed that a
measurement of the observables can have an arbitrary number of
outcomes. The corresponding 
entropic inequality is written as, 
\begin{equation}
H_{K_1}:H(A_1|B_1)-H(A_1|B_0)-H(B_0|A_0)-H(A_0|B_1)\leq 0.
\label{eq:entropic_chsh} 
\end{equation}

A violation of the above inequality implies a non-existence
of a joint probability distribution over all the observables
$A_x$ and $B_y$ $\forall x, y \in \lbrace 0, 1\rbrace$. 
In the next section, we show that ENC inequalities admit a
monogamous relationship which can be derived using a graph
theoretic formalism. We particularly focus on the entropic
Bell-CHSH scenario described above and show that it admits a
monogamous relationship.
\begin{figure}
\includegraphics[scale=0.8]{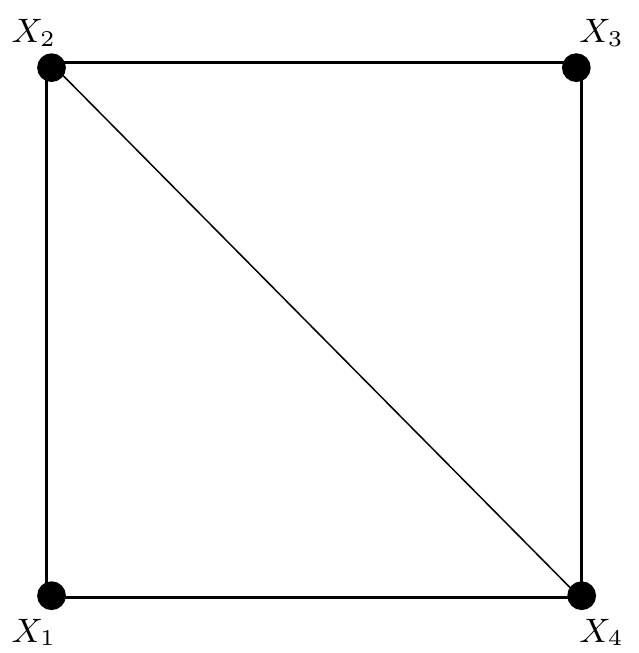}
\caption{The chordal subgraph where the vertices represents observables $X_i$ with $i=1, 2, 3, 4$ and the edge indicate commutativity relationship between the observables.}
\label{fig:generalized}
\end{figure}

\subsection{Monogamy of entropic inequalities}
\label{sec:mono}
We now elucidate the formalism to 
check and derive a monogamous relationship for any arbitrary scenario using the
graph theoretic formalism.
It should be remembered that a violation of the inequality~\eqref{eq:entropic_chsh}
implies that a joint probability distribution 
cannot exist over the given observables.

\begin{prop}
A monogamous relationship for a set of $n$ observables $X_i$,
corresponding to $m$ non-contextuality scenarios exists if
their joint commutation graph can be vertex decomposed into
$m$ chordal subgraphs such that all edges appearing in the
original $m$ non-contextuality graphs must appear just
once in the decomposition. The monogamous relationship then reads as,
\begin{equation}
H_{K_1}+H_{K_2}+H_{K_3}+....+H_{K_m} \leq 0,
\label{eq:std_mono1}
\end{equation}
where $H_{K_i}$ denotes the $i$th ENC inequality.
\label{prop:prop2}
\end{prop}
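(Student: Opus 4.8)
The plan is to derive~\eqref{eq:std_mono1} from exactly the two ingredients that produced the single–scenario bound~\eqref{eq:ent_main} — the chain rule $H(X,Y)=H(X\mid Y)+H(Y)$ and the monotonicity $H(X\mid Y)\le H(X)$ — but applied \emph{inside} the chordal blocks of the decomposition rather than around a single cycle. Write each ENC inequality in edge form, $H_{K_j}\le 0 \iff H(e_j)\le\sum_{e\in P_j}H(e)$, where $e_j=(X^{(j)}_0,X^{(j)}_{n_j-1})$ is the ``broken'' edge of the $j$th cycle, $P_j$ is the complementary path, and $H(e):=H(X\mid Y)$ denotes the conditional entropy carried by a directed edge $e=(X,Y)$; by the hypothesis, the edge set $\{e_1,\dots,e_m\}\cup\bigcup_jP_j$ is precisely the union of the $m$ non-contextuality graphs, each edge assigned to a single chordal block. \emph{Step~1 (a joint distribution on each block).} Because a chordal block $G_l$ is a subgraph of the joint commutation graph, each of its maximal cliques is a set of mutually compatible observables and thus carries a genuine quantum joint distribution, and no-signalling makes these agree on overlaps; the standard gluing property of chordal graphs — the running-intersection property supplied by a perfect elimination ordering / junction tree — then produces a single distribution $p_l$ over \emph{all} vertices of $G_l$ reproducing the quantum predictions on every clique. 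This is the only place chordality enters: for a non-chordal block, e.g.\ a bare $4$-cycle, no such $p_l$ need exist, which is exactly why one entropic Bell--CHSH inequality can be violated.

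\emph{Step~2 (routing the broken edges).} Each $e_j$ lies in a unique block $G_{l(j)}$; choose inside $G_{l(j)}$ a path $Q_j$ joining the two endpoints of $e_j$, avoiding $e_j$ itself and using only non-contextuality edges. Running the chain-rule/monotonicity manipulation of Eqs.~\eqref{eq:ent1}–\eqref{eq:ent_main} with $p_{l(j)}$ along $Q_j$ in place of the original cycle yields $H(e_j)\le\sum_{e\in Q_j}H(e)$.

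\emph{Step~3 (summation and cancellation).} Summing the Step~2 bounds over $j$ and substituting into $\sum_jH_{K_j}=\sum_jH(e_j)-\sum_j\sum_{e\in P_j}H(e)$ gives $\sum_jH_{K_j}\le\sum_j\sum_{e\in Q_j}H(e)-\sum_j\sum_{e\in P_j}H(e)$, so it suffices to show that the two structural hypotheses force $\bigcup_jQ_j=\bigcup_jP_j$ as multisets, collapsing the right-hand side to $0$. The ``each non-contextuality edge exactly once'' condition pins $\{e_j\}\cup\bigcup_jP_j$ down as a fixed edge set, and the chordality of the blocks is what confines the routing paths $Q_j$ to that same set; proving that the $Q_j$ can be chosen \emph{simultaneously consistently}, so that every path edge is reproduced once and only once among the routings, is the combinatorial heart of the argument and the step I expect to be the main obstacle in full generality. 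For the tripartite entropic Bell--CHSH scenario it is immediate: the octahedral joint commutation graph on $\{A_0,A_1,B_0,B_1,C_0,C_1\}$ splits into the two ``diamond'' blocks $\{A_0,A_1,B_0,C_1\}$ and $\{A_0,A_1,B_1,C_0\}$ (each the complete graph on its four vertices with the non-commuting pair $\{A_0,A_1\}$ deleted) sharing Alice's two observables, every Bell--CHSH edge occurs once, the broken edge $e_1=(A_1,B_1)$ routes through the path $A_1,C_0,A_0,B_1$ and $e_2=(A_1,C_1)$ through $A_1,B_0,A_0,C_1$, and the six routing terms cancel the six terms of $\sum_j\sum_{e\in P_j}H(e)$ exactly, giving $H_{K_1}+H_{K_2}\le 0$.
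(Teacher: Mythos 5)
Your proposal is correct and lands on the same chordal-decomposition strategy as the paper, but it justifies the key per-block step by a genuinely different device. The paper never constructs a distribution on a whole chordal block: it splits each block into $3$-cycles, notes that three mutually commuting observables always admit a joint distribution so every triangle ENC inequality holds, and then adds the triangle inequalities with cyclic/anti-cyclic orientations chosen so that the chord terms cancel --- exactly reproducing your ``path bound'' (compare Eq.~\eqref{eq:ch_ind_3}, which is $H(X_1|X_4)\leq H(X_1|X_2)+H(X_2|X_3)+H(X_3|X_4)$ for the diamond), and for the Alice--Bob--Charlie graph the four triangle inequalities regroup directly into $H_{K_1}+H_{K_2}\leq 0$. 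You instead invoke chordality once, through the running-intersection/junction-tree gluing of clique marginals (consistent by no-disturbance), to obtain a single distribution on each block, and then run the chain-rule/monotonicity argument along a routing path $Q_j$ for each broken edge $e_j$; this is a valid derivation of the same bound and is arguably more transparent about why chordality is the operative hypothesis, whereas the paper's version needs only joint measurability of triples plus some orientation bookkeeping. The step you flag as unresolved --- choosing the $Q_j$ simultaneously so that $\bigcup_j Q_j$ matches $\bigcup_j P_j$ as a multiset (in fact multiset inclusion suffices, since conditional entropies are nonnegative) --- is precisely the point the paper also leaves at the level of a recipe (``linear manipulation\ldots depends on the commutation graph''), so neither argument is fully rigorous in complete generality; for the tripartite entropic CHSH case your explicit routing, with blocks $\{A_0,A_1,B_0,C_1\}$ and $\{A_0,A_1,B_1,C_0\}$ and the chords avoided, reproduces the paper's decomposition of Fig.~\ref{fig:entropy} and the same six-term cancellation, merely written for the cyclic form of $H_{K_1},H_{K_2}$ with positive terms $H(A_1|B_1),H(A_1|C_1)$ rather than $H(A_0|B_0),H(A_0|E_0)$.
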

\begin{proof}

A chordal graph is a graph in which all cycles of four or more vertices have a chord. A chord is an edge that is not part of the cycle but which connects two of the vertices.
It should be noted that the definition of chordal subgraphs
implies that these subgraphs must have an edge connecting
two vertices, such that any induced cycles in the
subgraph have a length equal to three. Any induced cycles of length greater than $3$
should then have an edge such that the resultant induced cycle satisfies the 
definition.

For these $3$-cycle induced subgraphs, it is possible
to write down the corresponding entropic inequality.  Since
it is a $3$-cycle graph, a joint probability distribution
over the observables always exists~\cite{ramanathan-prl-12}
and the entropic inequality is never violated. Therefore, we
can add all the entropic inequalities obtained in this
fashion to obtain
\begin{equation}
\sum_l\sum_{i, j = 1}^{n}g_l(H'(X_i|X_j)),
\label{eq:grouping}
\end{equation}
where $H'(X_i|X_j) = H(X_i|X_j)$ only if the vertices $i$ and $j$ belong to an edge and zero 
otherwise, $g(X) = \pm X$ depending on the $3$ cycle chosen and $l$ is the number 
of edges appearing in the $m$ chordal graphs. It should be noted that Eq.~\eqref{eq:grouping}
is a linear combination of $H(X_i|X_j)$ and since we require that all the terms appearing in the individual $m$
ENC inequalities appear in the decomposition, it is
possible to obtain the form in Eq.~\eqref{eq:std_mono1} via
linear manipulation of the ENC inequalities of the $3$-cycle
induced graphs. It should be noted that each term appearing in
Eq.~\eqref{eq:ent_main} corresponds to an edge in the
commutativity graph and missing even a single edge in the
chordal decomposition, would make one of the
non-contextuality inequalities incomplete and
Eq.~\eqref{eq:std_mono1} unachievable. 
Furthermore, the linear manipulations required correspond to choosing a suitable form of 
$g(X)$ such that the $m$ ENC inequalities can be obtained by 
grouping certain terms together. This manipulation depends on the commutation graph of the 
interested scenario. 

As an example of our technique consider the chordal graph given in Fig.~\ref{fig:generalized} which 
is formed by two $3$-cycle graphs.  We assume that this is one of the subgraphs obtained via vertex decomposition of a joint commutativity graph such that the term $H(X_2|X_4)$ corresponding to the edge $(X_2, X_4)$ does not appear in Eq.~\eqref{eq:std_mono1}, while the terms corresponding to the other edges do appear. In order to eliminate this term, we consider the ENC inequality for the $3$-cycle graph with vertices $(X_1, X_2, X_4)$, written in a cyclic form as,
\begin{equation}
H(X_1|X_4)-H(X_1|X_2)-H(X_2|X_4) \leq 0,
\label{eq:ch_ind_1}
\end{equation}
while for the other $3$-cycle graph formed by the vertices $(X_2, X_3, X_4)$ we use an anti-cyclic form of the ENC inequality as,
\begin{equation}
H(X_2|X_4)- H(X_2|X_3)-H(X_3|X_4) \leq 0.
\label{eq:ch_ind_2}
\end{equation} 

Since all $3$-cycle graphs admit a joint probability distribution, the aforementioned inequalities are always satisfied and can therefore be added to give,
\begin{equation}
H(X_1|X_4)- H(X_1|X_2)-H(X_2|X_3)- H(X_3|X_4) \leq 0,
\label{eq:ch_ind_3}
\end{equation} 
which is the ENC inequality over the graph in Fig~\ref{fig:generalized} without the edge $(X_2, X_4)$. We use this technique to derive our monogamous relationships. In the case when such a vertex 
decomposition cannot be found, we cannot guarantee that a monogamous relation will exist.

\begin{figure}
\includegraphics[scale=1]{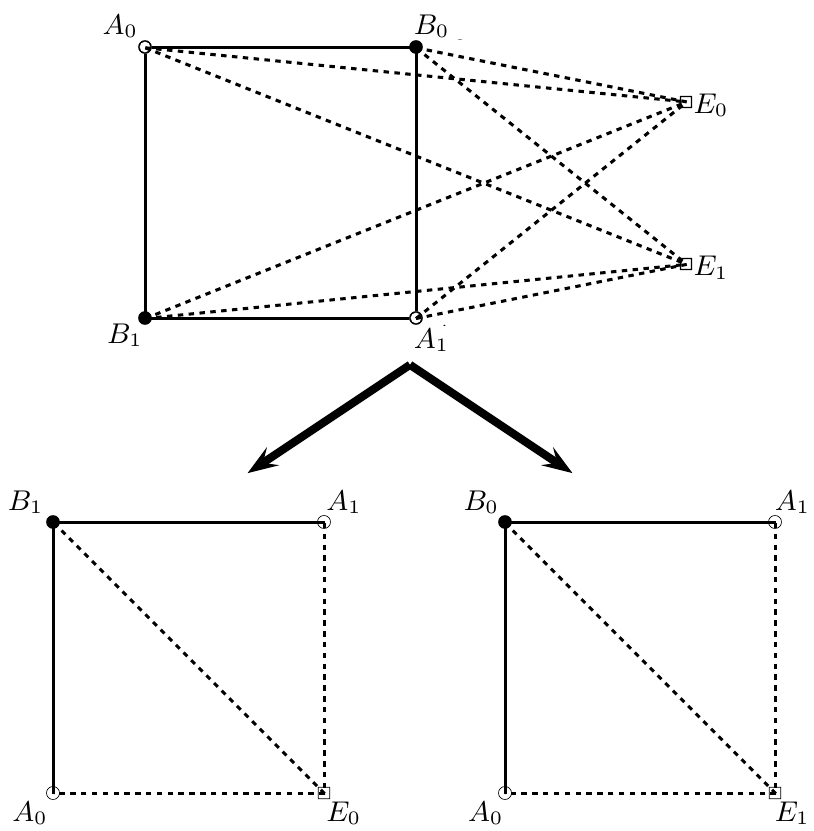}
\caption{The joint commutation graph of Alice-Bob-Eve (top)
and its chordal decomposition (below) 
according to Proposition~\ref{prop:prop2}, where solid lines
indicate commutativity between observables 
of Alice and Bob, while dashed lines indicate commutativity
between Alice (Bob) and Charlie.}
\label{fig:entropy}
\end{figure}
	
We now apply this technique to derive a monogamous relationship of two entropic CHSH inequalities. Consider a standard monogamous relationship 
between two entropic non-contextuality scenarios, $K_1$ and
$K_2$ which should dictate,
\begin{equation}
H_{K_1}+H_{K_2} \leq 0,
\label{eq:std_mono}
\end{equation}
where $H_{K_{1(2)}}\leq 0$ denotes the existence of a joint
probability distribution over the $n$ 
observables. Since the monogamy relationship must be 
obeyed at all times, Eq.~\eqref{eq:std_mono} then physically implies that 
a joint 
probability distribution over all the observables appearing in 
the new entropic inequality, $H = H_{K_1}+H_{K_2}$ exists.

As explained, in order to achieve a form of
Eq.~\eqref{eq:std_mono}, the joint graph of the
non-contextuality scenarios must be decomposable into
chordal subgraphs, which admit a joint probability
distribution. However, we also require an additional feature
that all edges of the individual non-contextuality graphs
must appear at least once in any of the chordal subgraphs.

Consider the CHSH scenario in which two parties Alice and
Bob can perform a measurement of the observables $A_0$,
$A_1$ and $B_0$, $B_1$ respectively. We assume a third party
Charlie with observables $E_0$ and $E_1$, which commute with
the observables of Alice and Bob. The scenario is
illustrated by a joint graph as shown in
Fig.~\ref{fig:entropy}, where vertices represent observables
and edges indicate the commutation relationship. Without
loss of generality we assume that Charlie would like to
violate the entropic CHSH inequality with Alice. The two
corresponding Alice-Bob and Alice-Charlie entropic CHSH
inequalities are,

\begin{equation}
\begin{aligned}
H_{K_1}: &H(A_0|B_0)-H(A_0|B_1)\\
&-H(B_1|A_1)-H(A_1|B_0)\leq 0,
\end{aligned}
\end{equation}

\begin{equation}
\begin{aligned}
H_{K_2}: &H(A_0|E_0)-H(A_0|E_1)\\
&-H(E_1|A_1)-H(A_1|E_0)\leq 0.
\end{aligned}
\end{equation}
The joint commutation graph is decomposed into two chordal
graphs while keeping the edges appearing in the individual
Alice-Bob and Alice-Charlie CHSH scenarios intact in the
decomposition, as shown in Fig.~(\ref{fig:entropy}). The
corresponding ENC inequalities for each $3$-cycle graph in
every chordal subgraph are given as, 
\begin{eqnarray}
H(A_0|E_0)-H(A_0|B_1)-H(B_1|E_0)\leq 0,\\
H(B_1|E_0)-H(B_1|A_1)-H(A_1|E_0)\leq 0,\\
H(A_0|B_0)-H(A_0|E_1)-H(E_1|B_0)\leq 0,\\
H(E_1|B_0)-H(E_1|A_1)-H(A_1|B_0)\leq 0,
\end{eqnarray}

Being cyclic and chordal, all the above inequalities are a
necessary and sufficient condition for a joint probability
distribution to exist. Furthermore, we have carefully chosen
the terms with positive and negative coefficients so as to
achieve a final form according to Eq.~\eqref{eq:std_mono}.
Adding the above inequalities and grouping the terms
according to $H_{K_1}$ and $H_{K_2}$, we obtain
\begin{equation}
H_{K_1}+H_{K_2}\leq 0,
\label{eq:mono_ent_chsh}
\end{equation}
which is the required monogamy relationship. We note that
this monogamy relationship was also derived
in~\cite{entropic_no_signalling}, albeit in a different
manner and specifically for the entropic CHSH inequality.
However, our formalism can be readily generalized to $n$
observables distributed among $m$ parties.  \end{proof} The
derived monogamy relationship~\eqref{eq:mono_ent_chsh}
imposes severe restrictions on the violation of
Alice-Charlie entropic CHSH inequality. 

\section{Experimental demonstration}
\label{sec:monogamy:exp}

In this section, we experimentally demonstrate the monogamy relationship 
derived for the entropic Bell-CHSH scenario~\eqref{eq:mono_ent_chsh} on an 
NMR quantum information processor, using two different set of states. We show that 
for both the sets of states, the entropic Bell-CHSH obeys the monogamy relationship 
we derived above.

\subsection{Implementation using a mixed tripartite state}
\label{subsec:mixed}

We experimentally implement the 
monogamy inequality~\eqref{eq:mono_ent_chsh} using a mixed 
tripartite state which is a classical mixture of two pure maximally entangled states given as: 

\begin{equation}
\rho = p (\vert \psi_1 \rangle \langle \psi_1 \vert) + (1-p) (\vert \psi_2 \rangle \langle \psi_2 \vert),
\label{eq:mono_mixed}
\end{equation}
where 

\begin{equation}
\begin{aligned}
\vert \psi_1 \rangle &= \frac{1}{\sqrt{2}}(\vert 001 \rangle + \vert 111 \rangle),  \\
\vert \psi_2 \rangle &= \frac{1}{\sqrt{2}}(\vert 010 \rangle + \vert 111 \rangle),  
\end{aligned}
\end{equation}
and $p\in\left[0, 1\right]$. As can be seen, the state $\rho$ physically implies that Alice and Bob share a 
maximally entangled state with probability $p$ while Charlie is separable, and with probability $1 - p$, Alice and  Charlie share a maximally entangled state while Bob is separable.

The observables of Alice, Bob and Charlie are assumed to lie in the $X-Z$ plane and correspond to Pauli spin measurements along the unit vectors $\bm{a}, \bm{a'}, \bm{b}, \bm{b'}$ and $\bm{e}, \bm{e'}$ respectively. The vectors $\bm{a}, \bm{b'}, \bm{a'}$ and $\bm{b}$ are successively separated by an angle $\frac{\theta}{3}$, while the vectors $\bm{e}$ and $\bm{e'}$ are taken to be the same as Bob's. The corresponding Pauli observables are then given as: 
\begin{equation} \label{eq1}
\begin{aligned}
 &A_0 = 
\begin{bmatrix}
1 & 0 \\
0 & -1 
\end{bmatrix}, \quad 
&&A_1 = 
 \begin{bmatrix}
\cos\frac{2\theta}{3} & -\sin\frac{2\theta}{3} \\
\sin\frac{2\theta}{3} & \cos\frac{2\theta}{3} 
\end{bmatrix} \\
&B_0 =
 \begin{bmatrix}
\cos\theta & -\sin\theta \\
\sin\theta & \cos\theta
\end{bmatrix}, \quad 
&&B_1 =
 \begin{bmatrix}
\cos\frac{\theta}{3} & -\sin\frac{\theta}{3} \\
\sin\frac{\theta}{3} & \cos\frac{\theta}{3} 
\end{bmatrix},
\end{aligned}
 \end{equation}
while the observables of Charlie ($E_0, E_1$) are the same as Bob's but acting in a different Hilbert space. All of the aforementioned observables have eigenvalues $a_i, b_i, e_i \in \{-1,+1\}$ and follow the commutativity conditions as shown in Fig.~\ref{fig:entropy}.

For the set of observables given in Eq.~\eqref{eq1}, the maximum violation of the Alice-Bob entropic CHSH inequality $H_{K1}=0.237$ bits is found to be at $\theta = 0.457$ radians when 
the parties share the state $\ket{\psi_1}$. The same also holds for Alice-Charlie entropic CHSH inequality when the parties share the 
state $\ket{\psi_2}$.

We experimentally implemented the monogamy relation given in
Eq.~\eqref{eq:mono_ent_chsh} on an 
eight-dimensional quantum system.  We used a
molecule of ${}^{13}$C -labeled diethyl fluoromalonate dissolved in acetone-D6,
with the ${}^{1}$H, ${}^{19}$F and ${}^{13}$C spin-1/2 nuclei being encoded as
`qubit one', `qubit two' and `qubit three', respectively (see Fig.~\ref{mole
and pps }(a)). The system was initialized in the pseudopure state (PPS)
$|000\rangle$  using the spatial averaging technique~\cite{mitra-jmr-07} and the
corresponding NMR spectra and experimental tomographs are given in
Fig.~\ref{mole and pps }(a),(b).

Experiments were performed
on a Bruker Avance III 600-MHz FT-NMR
spectrometer equipped with a QXI probe.  Local unitary operations 
were achieved
by RF pulses of suitable amplitude, phase, and duration and nonlocal
unitary operations were
achieved by free evolution under the system Hamiltonian.
The $T_1$ and $T_2$ 
relaxation times
of ${}^{1}$H, ${}^{19}$F,  ${}^{13}$C spin-1/2
nuclei range from 4.16 sec to 7.16 sec and 0.99 sec to 3.56 sec, respectively.
The duration of the $\frac{\pi}{2}$ pulses for ${}^{1}$H, ${}^{19}$F, and
${}^{13}$C nuclei are 9.5 $\mu$s at 18.14 W power level, 22.2 $\mu$s at a power
level of 42.27 W, and 15.2 $\mu$s at a power level of 179.47 W, respectively.
At room temperature, NMR experiments are only sensitive to the deviation
density matrix and the initial state is prepared from the thermal equilibrium
into a PPS:

\begin{equation}
\rho_{000} = \frac{1 -	\epsilon}{2^3} I_8  + \epsilon |000\rangle \langle 000|
\label{eq:pps daviasion}
\end{equation}

where $\epsilon \approx 10^{-6}$ and $I_8$ is $8 \times 8 $ identity operator.
The system is initialized in the PPS
state \ie $|000\rangle$ using the spatial averaging technique
\cite{mitra-jmr-07}, which is based on dividing the system in sub-ensembles
which can be accessed independently in NMR
by using a combination of RF pulses and pulsed magnetic gradients.  The state
tomography was 
performed using the least square optimization
technique\cite{gaikwad-qif-21} with an experimental state fidelity of 0.98.

\begin{figure}
\centering
\includegraphics[scale=1.0]{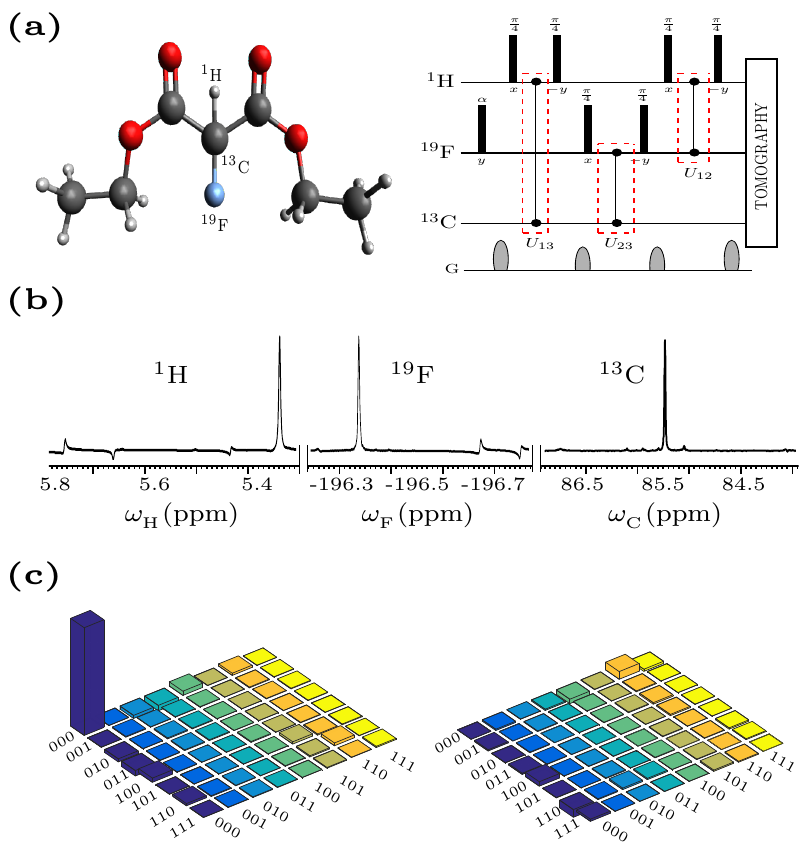}
\caption{(a) Molecular structure of $\rm^{13}C$ - labeled diethyl
fluoromalonate. $\rm^{1}H$, $\rm^{19}F$, $\rm^{13}C$ are used as three qubits
in the experiment. NMR pulse 
sequence for the initialization of the state into
the PPS $|000\rangle$ state. The value of the flip angle $\alpha$ is kept at
$57.87^{\circ}$. 
The phases and the flip angles are written with each pulse.
The $U_{12}, U_{23}, U_{13}$ are evolution operators with time intervals
$\tau_{12}$, $\tau_{23}$, $\tau_{13}$ equal to
$\frac{1}{2J_{HF}}$,
$\frac{1}{2J_{HC}}$, and $\frac{1}{2J_{FC}}$
respectively, shown by red dotted lines. (b) The
NMR spectra corresponding to the PPS $|000 \rangle$. The
horizontal scale represents the chemical shifts in ppm. (c)  Real (left) and
imaginary (right) parts of the experimentally reconstructed tomographs of the
state $ |000 \rangle$ with an experimental state fidelity of
0.98.}

\label{mole and pps }
\end{figure}

We began by preparing the mixed tripartite state given in
Eq.~\eqref{eq:mono_mixed} for different
values of $p$. In order to achieve this,
we utilized the
temporal averaging technique~\cite{oliveira-book-07,
knill-pra-98}. Using this technique, it is possible to prepare arbitrary mixed
states on an NMR quantum information processor, by applying suitable unitary
transformations on some common initial PPS. The different experiments 
are
performed on common initial states, the results of which are independently
stored. Finally, these results are combined to produce an average state which
simulates the behavior of a mixed state.

In our case, we prepared the mixed state given in  Eq.~\eqref{eq:mono_mixed},
which is a mixture 
of two pure states $\vert \psi_1 \rangle$, $\vert \psi_2
\rangle$. We first prepared these two states  by applying suitable unitaries on
the initial state $\vert 000 \rangle \langle 000\vert$ in two different and
independent experiments~\cite{singh-pla-19}.  The states of these two
experiments are then added with appropriate probabilities to achieve the
desired mixed state given in Eq.~\eqref{eq:mono_mixed}. To demonstrate the
monogamy relation, we chose different values of $p$ to experimentally prepare
the desired mixed state with  experimental state fidelities $\geq 0.956$. The
tomograph of one such experimentally prepared state with  $p=1$ is shown in 
Fig.~\ref{tomomixed}.

\begin{figure}
\centering
\includegraphics[scale=1.0]{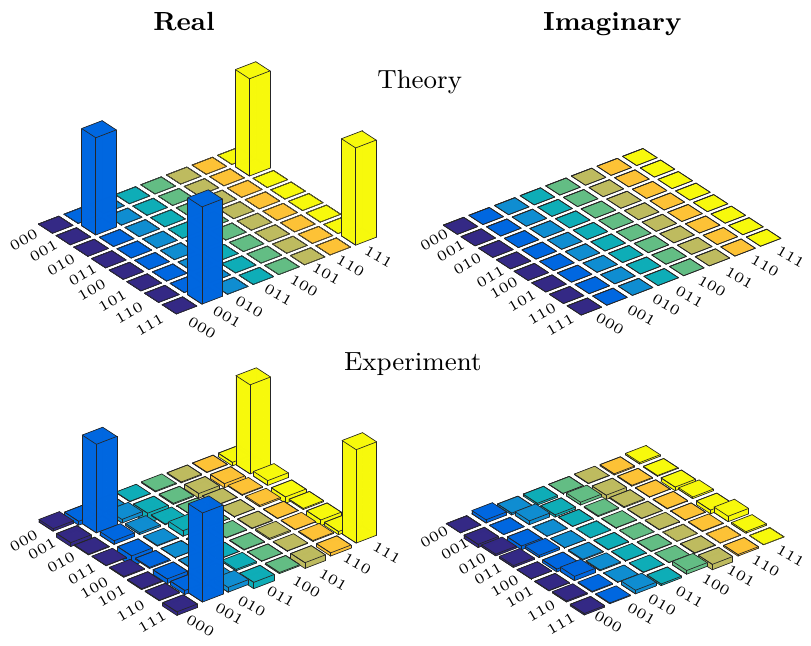}
\caption{Real and imaginary parts of the
tomographs of the theoretically and
experimentally reconstructed density operator for the tripartite state $\rho$
with $p=1$ and an experimental state fidelity of $0.97$.}
\label{tomomixed}
\end{figure}

After experimentally preparing the states, we measured the desired
probabilities in order to calculate the entropies involved in the
inequality given in Eq.~\eqref{eq:mono_ent_chsh}. In order to 
calculate the probabilities, we
transformed the required probabilities in terms of expectation values.  This is
necessary because experiments on an NMR quantum information processor yield
only expectation values of the observables. These expectation values are
evaluated by decomposing the observables in terms of linear combinations of
Pauli operators which can be mapped to the single-qubit Pauli $Z$ operator.
This mapping is particularly useful in the context of an NMR experimental setup
where the expectation value of the $Z$ operator is easily accessible and
corresponds to the observed $z$ magnetization of a nuclear spin in a particular
quantum state. The normalized experimental intensities of the NMR signal then
provide an estimate of the expectation value of the Pauli $Z$ operator in that
quantum state~\cite{dileep-pra-19}.

The probabilities can be written as $P(A_i=a_i, B_j=b_j)=\text{tr}(\rho\vert
a_{i} \rangle \langle a_{i} \vert \otimes |b_{j}\rangle \langle b_{j} \vert
\otimes I )$ and $P(A_i=a_i, E_j=e_j)=\text{tr}(\rho\vert a_{i} \rangle \langle
a_{i} \vert \otimes I  \otimes |e_{j}\rangle \langle e_{j} \vert )$ where
$\vert a_{i} \rangle$, $\vert b_{j}\rangle$, $\vert e_{j}\rangle$ are the
eigenvector of the observables corresponding to Alice, Bob and Charlie,
respectively. 
The observables ($\vert a_{i}\rangle \langle a_{i} \vert
\otimes |b_{j}\rangle \langle b_{j} \vert \otimes I $) , ($\vert a_{i}\rangle
\langle a_{i} \vert \otimes I \otimes |e_{i}\rangle \langle e_{i} \vert)$ are
decomposed in terms of linear combinations of Pauli operators and details are
given in Appendix-A.  The idea is to unitarily map the state $\rho$ to another
state $\rho'$, such that $\langle X \rangle _{\rho} = \langle I_{iz}\rangle
_{\rho'}$ where $X$ is the observable to be measured in the state $\rho$ and
$I_{iz}$ is the z-spin angular momentum of the qubit. This can be achieved by
measuring the $I_{iz}$ on the state $\rho'$.  For example, one can find the
expectation values of $ \langle \sigma_x  \otimes \sigma_x \otimes I \rangle
_\rho $ and $\langle \sigma_x \otimes I \otimes \sigma_x \rangle_\rho  $ which
are involved in the evaluation of probabilities 
(see Appendix-A), by using the
quantum circuit and corresponding NMR pulse sequence given in Fig.\ref{map
state}(a) and Fig.\ref{map state}(b) respectively, where the implementation is
followed by a measurement of the spin magnetization of 
the second and third
qubits, respectively.

\begin{figure}
\centering
\includegraphics[scale=1.0]{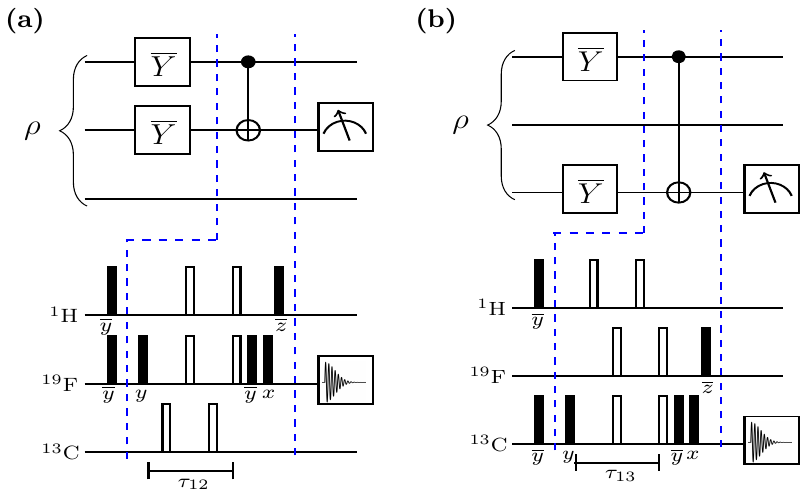}
\caption{Quantum circuit and corresponding NMR pulse sequence  to map the state $\rho$ to the state $\rho'$ such that (a) $ \langle \sigma_x  \otimes \sigma_x \otimes I \rangle _\rho = \langle I_{2z} \rangle _{\rho'} $, (b) $ \langle \sigma_x \otimes I \otimes \sigma_x \rangle_\rho  = \langle I_{3z} \rangle _{\rho'} $  . In the NMR pulse sequences, the filled rectangles are $\frac{\pi}{2}$ RF pulses while unfilled rectangles denotes the $\pi$ RF pulses. The phase of each pulse is written over the respective pulse and bar over the phase denotes the negative phase. The free evolution of time is denoted by $\tau_{12}, \tau_{13}$. }
\label{map state}
\end{figure}

We experimentally calculated the ENC inequalities $H_{K1}$, $H_{K2}$ and the
ENC monogamy relation $H_{K1}+H_{K2}$ for the tripartite mixed state given in
Eq.~\eqref{eq:mono_mixed} for different values of $p$. Experimental values of
$H_{K1}$, $H_{K2}$, $H_{K1}+H_{K2}$ with respect to various values of $p$ are
plotted in Fig.~\ref{mixed state graph}.  It can be seen that $H_{K1}$ is
violated for the tripartite state with $p=1$, while $H_{K2}$ is violated for
the state with $p=0$. The monogamy relation $H_{K1}+H_{K2}$ is never violated
for any value of $p$ and the results are in good agreement with the theoretical
predictions.

\begin{figure}
\centering
\includegraphics[scale=1.0]{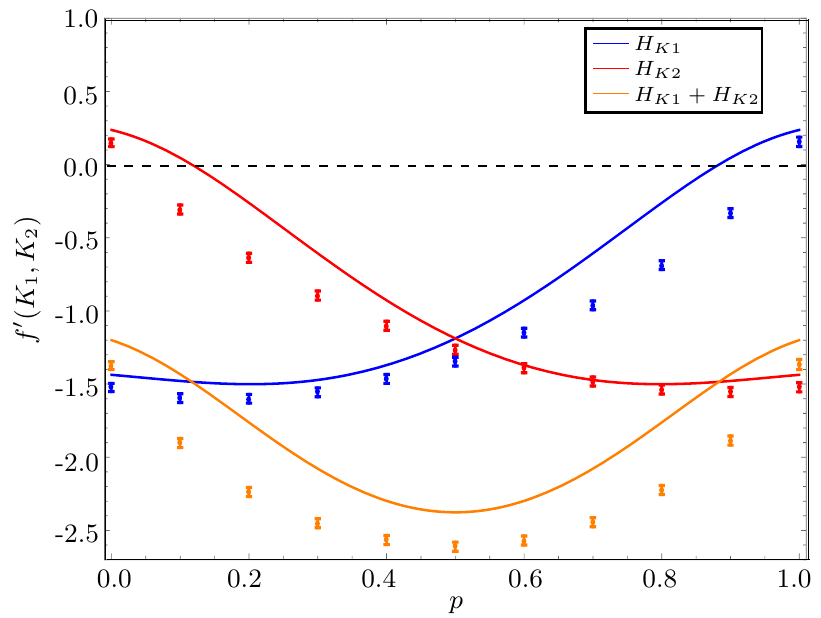}
\caption{The experimentally calculated values for the monogamy inequality
Eq.~\eqref{eq:mono_ent_chsh}  denoted by $f'(K_1, K_2)$ for the tripartite mixed states
given in Eq.~\eqref{eq:mono_mixed} for different values of $p$. The black dotted line
is the maximum non-contextual value 
that can be achieved by inequalities $H_{K1}$, $H_{K2}$
and  $H_{K1}+H_{K2}$ to follow the ENC theories and monogamy relationship,
respectively. The red color line and the red points with error bars 
represent
the theoretical and experimental calculated values for the 
inequality $H_{K1}$, respectively. 
Similar representations for the inequalities $H_{K2}$ and
$H_{K1}+H_{K2}$ are used.}
\label{mixed state graph}
\end{figure}

It is seen that the experimental values are always lower than the corresponding
theoretical values. This is due to the fact that the maximum value of the
inequality is always achieved for a pure state and any addition of noise makes
it a mixed state. Therefore, the value is always observed to be lower than the
theoretical one. The difference between them also increases with increasing
values of $p$ upto $p = 0.5$.  For this region, the state prepared for $p =
0.0$ is the dominant one which is shown to violate $H_{K_1}$. Furthermore, the
corresponding inequalities are logarithmic in nature, which also compounds the
errors for states till $p = 0.5$. However, after this value the state prepared
for $p = 1.0$  dominates and the errors again follow a similar trend. It should
be noted that the experimental curve follows the same trend as the
theoretical curve.

\subsection{Implementation using a pure tripartite state}
\label{subsec:pure}
In this subsection, we experimentally test the monogamy relation
Eq.~\eqref{eq:mono_ent_chsh} for a pure tripartite state, having two parameters
which we can vary. We show that the monogamy relation holds and is in
good agreement with theoretical predictions.

We take a pure tripartite state of the form,
\begin{equation}
\vert \phi \rangle = N (p_1 \vert 001 \rangle +  p_2 \vert 010 \rangle + (p_1+p_2) \vert 111 \rangle),
\label{eq:mono_tripartite pure state}
\end{equation}
where $N = \frac{1}{\sqrt{p_1^2 + p_2^2 + (p_1 + p_2)^2}} $ is the
normalization factor. We experimentally prepared five different states
corresponding to various values of $p_1$ and $p_2$.  The quantum circuit and
the corresponding NMR pulse sequence is given in Fig.~\ref{ckt pure state}(a),
(b). Different pure states corresponding to various values of $p_1$ and
$p_2$ were generated by suitably choosing the values of $\theta_1, \theta_2$
and $\theta_3$. Tomograph of one such experimentally prepared state with
$p_1=0.25$ and $p_2=0.50$ is depicted in Fig.~\ref{tomo pure}, with an
experimental state fidelity of $0.93$. After preparing the states, we measured
the desired probabilities by mapping the state onto the Pauli basis operators
in order to calculate the entropies involved in the inequality
Eq.~\eqref{eq:mono_ent_chsh} as discussed earlier in Sec.~\ref{subsec:mixed}.

\begin{figure}
\centering
\includegraphics[scale=1.0]{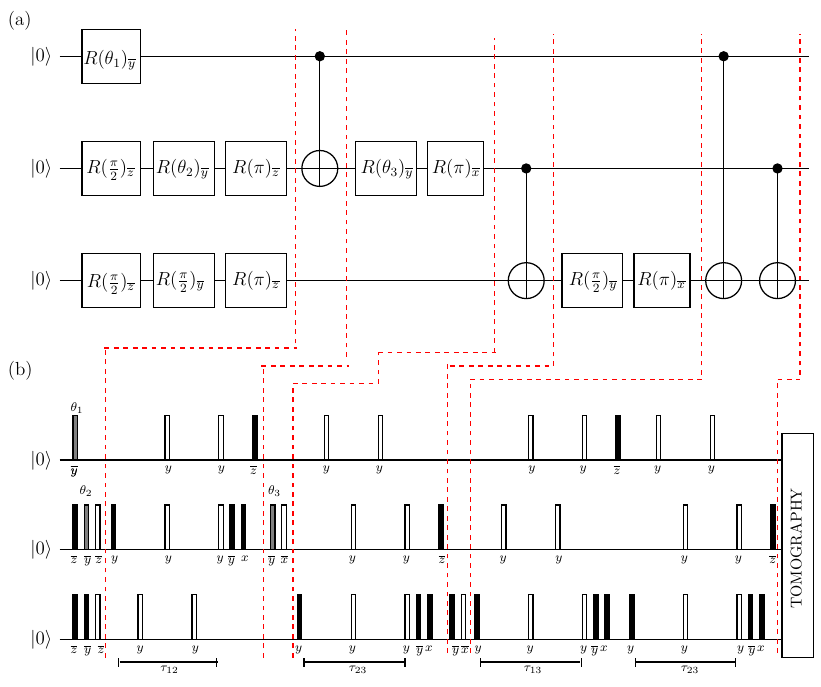}
\caption{(a) Quantum circuit for 
state preparation for the tripartite pure state
$\vert \phi \rangle$ and (b) corresponding NMR pulse sequence. 
The filled rectangles 
denote $\frac{\pi}{2}$ RF pulses while the unfilled
rectangles denote $\pi$ RF pulses. 
The phase of each pulse is written over
the respective pulse, and bar over the phase denotes a 
negative phase.
$\tau_{12}, \tau_{23}, \tau_{13}$ are the free time evolutions. All the CNOTS
are separated by red dotted lines.}
\label{ckt pure state}
\end{figure}

\begin{figure}
\centering
\includegraphics[scale=1.0]{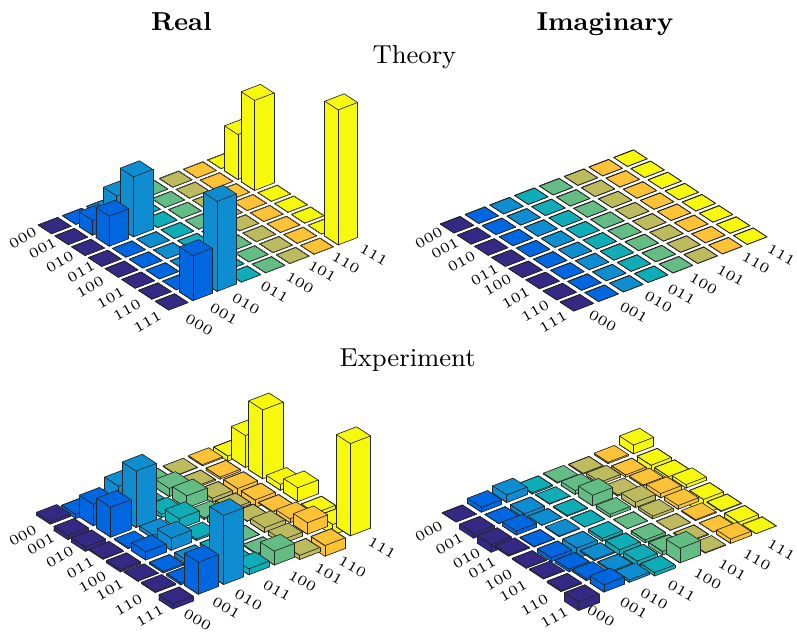}
\caption{Real and imaginary part of tomograph of the theoretically and
experimentally reconstructed density operator for the tripartite pure state
$\rho'=\vert \phi \rangle \langle \phi \vert $ with $p_1=0.25, p_2=0.50 $ with
an experimental state fidelity of $0.93$.}
\label{tomo pure}
\end{figure}

We experimentally evaluate $H_{K_1}$, $H_{K_2}$ and $H_{K_1}+H_{K_2}$
with respect to the different values of $p_1$ and $p_2$ and the results
are given in 
Table-\ref{inequality values pure state}. We see that 
the inequality $H_{K_1}$ is violated for $p_1 = 1$ and $p_2 = 0$ and  $H_{K_2}$ is violated for $p_1 = 0$ and $p_2 = 1$, while the 
monogamy relationship $H_{K_1}+H_{K_2}$ is never violated for any 
value of $p_1$ and $p_2$. This shows that the monogamy relation between the ENC inequalities is obeyed for all such possible states.

\begin{table*}
\setlength{\tabcolsep}{8pt} 
\renewcommand{\arraystretch}{1.5}
\caption{Theoretically computed and experimentally measured values of the
monogamy inequalities for a pure state.}
\vspace*{12pt}
\centering
\begin{tabular}{|cc|cc|cc|cc|}
\hline
 &   & \multicolumn{2}{c|}{$H_{K_1}$}
 &    \multicolumn{2}{c|}{$H_{K_2}$} &  \multicolumn{2}{c|}{$H_{K_1}+H_{K_2}$}\\
\hline
$p_1$ &  $p_2$ & Theory  &
Experiment & Theory & Experiment  & Theory & Experiment \\
\hline
\hline
1.00 & 0.00 & 0.236 & 0.156$\pm$0.032 & -1.436 & -1.522$\pm$0.035 & -1.200 & -1.366$\pm$0.034 \\
0.50 & 0.25 & -0.492 & -0.606$\pm$0.021 & -1.338 & -1.413$\pm$0.027 & -1.830 & -2.019$\pm$0.024 \\
0.50 & 0.50 & -1.017 & -1.103$\pm$0.022 & -1.017 & -1.082$\pm$0.030 & -2.034 & -2.185$\pm$0.026 \\
0.25 & 0.50 & -1.338 & -1.397$\pm$0.021 & -0.492 & -0.598$\pm$0.024 & -1.830 & -1.995$\pm$0.023 \\
0.00 & 1.00 & -1.436 & -1.523$\pm$0.028 & 0.236 & 0.149$\pm$0.025 & -1.200 & -1.374$\pm$0.027 \\
\hline
\end{tabular}
\label{inequality values pure state}
\end{table*}

\section{Concluding Remarks}
\label{concl}
In this paper we develop a theoretical framework to analyze when a scenario
will exhibit a monogamous relationship. We perform our analysis of the monogamy
relations expected in a general tripartite scenario for the entropic Bell-CHSH
inequality and give the first theoretical study of the monogamy of entropic
inequalities based on the graph theoretic formalism.  We also give the first
experimental demonstration of monogamy of  entropic correlations using three
NMR qubits. Evaluating entropy on an NMR quantum processor 
is quite hard and has not been
performed earlier.  
We are able to obtain information about entropies using 
measurements of only the expectation values of observables.
We experimentally show the monogamy of entropic inequality
for a pure tripartite state as well as for the mixed state. This indicates with
certainty that our formalism holds in general. 

It should be noted that due to limitations of access to individual events,  the NMR implementation 
of such scenarios  involves use of Born rule to interpret the  results of measurements.
This is not the ideal way of carrying out  such experiments and a more refined test
would evaluate the requisite probabilities without assuming the Born rule. This is possible 
on optical systems~\cite{zhan-prl-17,qu-pra-20}, where the probabilities are calculated by estimating the frequency clicks of the photodetectors rather than the Born rule.

The experimental implementation
of monogamy inequalities is important for quantum information
processing tasks and our results are a step forward in this direction.

\begin{acknowledgments}
All the experiments were performed on a Bruker Avance-III 600 MHz FT-NMR
spectrometer at the NMR Research Facility of IISER Mohali. 
A.
acknowledges financial support from
DST/ICPS/QuST/Theme-1/2019/Q-68.  K~.D. acknowledges financial support from
DST/ICPS/QuST/Theme-2/2019/Q-74. JS acknowledges support by \href{http://dx.doi.org/10.13039/100009042}{Universidad de Sevilla} Project Qdisc (Project No.\ US-15097), with FEDER funds, \href{http://dx.doi.org/10.13039/501100011033:}{MCINN/AEI} Projet No.\ PID2020-113738GB-I00, and QuantERA grant SECRET, by \href{http://dx.doi.org/10.13039/501100011033:}{MCINN/AEI} (Project No.\ PCI2019-111885-2).
\end{acknowledgments}

\appendix
\label{appendix}
\section{Observable decomposition in terms of Pauli operators }
\label{sec:sec_app_b}
To experimentally test the inequality on an NMR quantum processor, we
have calculated the entropies involved in the inequality by 
measuring the probabilities for an experimentally prepared state. This
can be achieved by decomposing the projectors $\vert a_{\pm 1}\rangle \langle
a_{\pm 1} \vert \otimes \vert b_{\pm 1}\rangle \langle b_{\pm 1} \vert $  as
linear combinations of Pauli operators,
which can then be
mapped to a single-qubit Pauli $Z$ operator. 
In the tripartite scenario, the probabilities can be
written as $P(A_i=\pm1, B_j=\pm1)=\text{tr}(\rho\vert a_{\pm1}\rangle \langle
a_{\pm1} \vert \otimes |b_{\pm1}\rangle \langle b_{\pm1} \vert \otimes I )$ and
$P(A_i=\pm1, E_j=\pm1)=\text{tr}(\rho\vert a_{\pm1}\rangle \langle a_{\pm1}
\vert \otimes I \otimes |e_{\pm1}\rangle \langle e_{\pm1} \vert  )$ for the
Alice-Bob and Alice-Charlie entropic CHSH inequalities. To test the monogamy
inequality Eq.\eqref{eq:mono_ent_chsh}, we measure the probabilities to calculate
the entropies $H(A_i \vert B_j)$  $H(A_i \vert E_j)$ involved in the
inequalities $H_{K1}$ and $H_{K2}$.

Probabilities are calculated by decomposing the observables 
in terms of linear combination of Pauli operators as: 
 
\begin{equation}
\begin{aligned}
\text{P(A$_1=+1$)}&=0.286 \text{b$_{16}$}+0.410 \text{b$_{48}$} +0.500 I,\\
\text{P(B$_1=+1$)}&=0.476 \text{b$_{12}$}+0.149 \text{b$_{4}$}+0.500I,\\
\text{P(A$_0=+1$, B$_0=+1$)}&=0.152 \text{$b_{12}$}+0.197 \text{b$_{4}$}+0.250\text{b$_{48}$}\\
&+0.197 \text{b$_{52}$}+0.152 \text{b$_{60}$}+0.250I,\\
\text{P(A$_0=+1$, B$_0=-1$)}&=-0.152 \text{b$_{12}$}-0.197 \text{b$_{4}$}+0.250 \text{b$_{48}$} \\ &-0.197\text{b$_{52}$}-0.152\text{b$_{60}$}+0.250I,\\
\text{P(A$_0=-1$, B$_0=+1$)}&=0.152 \text{b$_{12}$}+0.197 \text{b$_{4}$}-0.250 \text{b$_{48}$}\\
&-0.197 \text{b$_{52}$}-0.152 \text{b$_{60}$}+0.250 I,\\
\text{P(A$_0=-1$, B$_0=-1$)}&=-0.152 \text{b$_{12}$}-0.197 \text{b$_{4}$}-0.250 \text{b$_{48}$}\\
&+0.197\text{b$_{52}$}+0.152 \text{b$_{60}$}+0.250\text{I},\\
\text{P(A$_0=+1$, B$_1=+1$)}&=0.238 \text{b$_{12}$}+0.074 \text{b$_{4}$}+0.250 \text{b$_{48}$}\\
&+0.074\text{b$_{52}$}+0.238\text{b$_{60}$}+0.250 I,\\
\text{P(A$_0=+1$, B$_1=-1$)}&=-0.238 \text{b$_{12}$}-0.074 \text{b$_{4}$}+0.250\text{b$_{48}$}\\&-0.074\text{b$_{52}$}-0.238 \text{b$_{60}$}+0.250 I,\\
\text{P(A$_0=-1$, B$_1=+1$)}&=0.238 \text{b$_{12}$}+0.074 \text{b$_{4}$}-0.250 \text{b$_{48}$}\\&-0.074\text{b$_{52}$}-0.238 \text{b$_{60}$}+0.250 I,\\
\text{P(A$_0=-1$, B$_1=-1$)}&=-0.238 \text{b$_{12}$}-0.074 \text{b$_{4}$}-0.250\text{b$_{48}$}\\&+0.074\text{b$_{52}$}+0.238 \text{b$_{60}$}+0.250 I,\\
\end{aligned}
\end{equation}

\begin{equation}
\begin{aligned}
\text{P(B$_1=+1$, A$_1=+1$)}&=0.205  \text{b$_{12}$}+0.074 \text{b$_{16}$}+0.042 \text{b$_{20}$}\\&
+0.061 \text{b$_{28}$}+0.143 \text{b$_{4}$}+0.238 \text{b$_{48}$} \\
&+0.136 \text{b$_{52}$}+0.195 \text{b$_{60}$}+0.250 I,\\
\text{P(B$_1=+1$, A$_1=-1$)}&=-0.205\text{b$_{12}$}+0.074 \text{b$_{16}$}-0.042 \text{b$_{20}$}\\&-0.061\text{b$_{28}$}-0.143 \text{b$_{4}$}+0.238\text{b$_{48}$} \\
&-0.136\text{b$_{52}$} -0.195 \text{b$_{60}$}+0.250 I,\\
\text{P(B$_1=-1$, A$_1=+1$)}&=0.205\text{b$_{12}$}-0.074 \text{b$_{16}$}-0.042 \text{b$_{20}$}\\&-0.061 \text{b$_{28}$}+0.143\text{b$_{4}$}-0.238\text{b$_{48}$}\\
&-0.136 \text{b$_{52}$}-0.195\text{b$_{60}$}+0.250I,\\
\text{P(B$_1=-1$, A$_1=-1$)}&=-0.205 \text{b$_{12}$}-0.074 \text{b$_{16}$}+0.042 \text{b$_{20}$}\\&+0.061 \text{b$_{28}$}-0.143 \text{b$_{4}$}-0.238 \text{b$_{48}$}\\
&+0.136 \text{b$_{52}$}+0.195 \text{b$_{60}$}+0.250 I,\\
\end{aligned}
\end{equation}

\begin{equation}
\begin{aligned}
\text{P(A$_1=+1$, B$_0=+1$)}&=0.152 \text{b$_{12}$}+0.143 \text{b$_{16}$}+0.113 \text{b$_{20}$}\\&+0.078 \text{b$_{28}$}+0.197 \text{b$_{4}$}+0.205 \text{b$_{48}$}\\
&+0.162 \text{b$_{52}$}+0.125 \text{b$_{60}$}+0.250I,\\
\text{P(A$_1=+1$, B$_0=-1$)}&=-0.152 \text{b$_{12}$}+0.143 \text{b$_{16}$}-0.113 \text{b$_{20}$}\\&-0.078 \text{b$_{28}$}-0.197 \text{b$_{4}$}+0.205 \text{b$_{48}$}\\
&-0.162 \text{b$_{52}$}-0.125 \text{b$_{60}$}+0.250 I,\\
\text{P(A$_1=-1$, B$_0=+1$)}&=0.152 \text{b$_{12}$}-0.143 \text{b$_{16}$}-0.113\text{b$_{20}$}\\&-0.078\text{b$_{28}$}+0.197 \text{b$_{4}$}-0.205\text{b$_{48}$}\\
&-0.162\text{b$_{52}$}-0.125 \text{b$_{60}$}+0.250I,\\
\text{P(A$_1=-1$, B$_0=-1$)}&=-0.152 \text{b$_{12}$}-0.143 \text{b$_{16}$}+0.113 \text{b$_{20}$}\\&+0.078\text{b$_{28}$}-0.197\text{b$_{4}$}-0.205\text{b$_{48}$}\\
&+0.162\text{b$_{52}$}+0.125 \text{b$_{60}$}+250I.
\label{eq:decomposition1}
\end{aligned}
\end{equation}

\begin{equation}
\begin{aligned}
\text{P(A$_1=+1$)}&=0.286 \text{b$_{16}$}+0.410\text{b$_{48}$}+0.500I,\\
\text{P(E$_1=+1$)}&=0.149 \text{b$_{1}$}+0.476 \text{b$_{3}$}+0.500I,\\
\text{P(A$_0=+1$, E$_0=+1$)}&=0.197 \text{b$_{1}$}+0.152 \text{b$_{3}$}+0.250 \text{b$_{48}$}\\&+0.197 \text{b$_{49}$}+0.152 \text{b$_{51}$}+0.250 I,\\
\text{P(A$_0=+1$, E$_0=-1$)}&=-0.197 \text{b$_{1}$}-0.152 \text{b$_{3}$}+0.250 \text{b$_{48}$}\\&-0.197 \text{b$_{49}$}-0.152 \text{b$_{51}$}+0.250 I,\\
\text{P(A$_0=-1$, E$_0=+1$)}&=0.197 \text{b$_{1}$}+0.152\text{b$_{3}$}-0.250 \text{B$_{48}$}\\&-0.197 \text{b$_{49}$}-0.152 \text{b$_{51}$}+0.250 I,\\
\text{P(A$_0=-1$, E$_0=-1$)}&=-0.197 \text{b$_{1}$}-0.152 \text{b$_{3}$}-0.250 \text{b$_{48}$}\\&+0.197\text{b$_{49}$}+0.152 \text{b$_{51}$}+0.250I,\\
\text{P(A$_0=+1$, E$_1=+1$)}&=0.074 \text{b$_{1}$}+0.238 \text{b$_{3}$}+0.250 \text{b$_{48}$}\\&+0.074 \text{b$_{49}$}+0.238 \text{b$_{51}$}+0.250 I,\\
\text{P(A$_0=+1$, E$_1=-1$)}&=-0.074 \text{b$_{1}$}-0.238 \text{b$_{3}$}+0.250\text{b$_{48}$}\\&-0.074 \text{b$_{49}$}-0.238 \text{b$_{51}$}+0.250 I,\\
\text{P(A$_0=-1$, E$_1=+1$)}&=0.074 \text{b$_{1}$}+0.238 \text{b$_{3}$}-0.250\text{b$_{48}$}\\&-0.074 \text{b$_{49}$}-0.238 \text{b$_{51}$}+0.250I,\\
\text{P(A$_0=-1$, E$_1=-1$)}&=-0.074\text{b$_{1}$}-0.238 \text{b$_{3}$}-0.250 \text{b$_{48}$}\\&+0.074 \text{b$_{49}$}+0.238 \text{b$_{51}$}+0.250 I,\\
\end{aligned}
\end{equation}

\begin{equation}
\begin{aligned}
\text{P(E$_1=+1$, A$_1=+1$)}&=0.143 \text{b$_{1}$}+0.074 \text{b$_{16}$}+0.042 \text{b$_{17}$}\\&+0.061 \text{b$_{19}$}+0.205 \text{b$_{3}$}+0.238 \text{b$_{48}$}\\
&+0.136 \text{b$_{49}$}+0.195 \text{B$_{51}$}+0.250 I,\\
\text{P(E$_1=+1$, A$_1=-1$)}&=-0.143 \text{b$_{1}$}+0.074 \text{b$_{16}$}-0.042 \text{b$_{17}$}\\&-0.061 \text{b$_{19}$}-0.205 \text{b$_{3}$}+0.238 \text{b$_{48}$}\\
&-0.136\text{b$_{49}$}-0.195 \text{b$_{51}$}+0.250 I,\\
\text{P(E$_1=-1$, A$_1=+1$)}&=0.1430 \text{b$_{1}$}-0.074 \text{b$_{16}$}-0.042 \text{b$_{17}$}\\&-0.061 \text{b$_{19}$}+0.205 \text{b$_{3}$}-0.238 \text{b$_{48}$}\\
&-0.136 \text{b$_{49}$}-0.195 \text{b$_{51}$}+0.250I,\\
\text{P(E$_1=-1$, A$_1=-1$)}&=-0.143 \text{b$_{1}$}-0.074 \text{b$_{16}$}+0.042 \text{b$_{17}$}\\&+0.061 \text{b$_{19}$}-0.205 \text{b$_{3}$}-0.238 \text{b$_{48}$}\\
&+0.136 \text{b$_{49}$}+0.195 \text{b$_{51}$}+0.250 I,\\
\end{aligned}
\end{equation}

\begin{equation}
\begin{aligned}
\text{P(A$_1=+1$, E$_0=+1$)}&=0.197\text{b$_{1}$}+0.143 \text{b$_{16}$}+0.113\text{b$_{17}$}\\&+0.078 \text{b$_{19}$}+0.152 \text{b$_{3}$}+0.205 \text{b$_{48}$}\\
&+0.162 \text{b$_{49}$}+0.125 \text{b$_{51}$}+0.250 I,\\
\text{P(A$_1=+1$, E$_0=-1$)}&=-0.197  \text{b$_{1}$}+0.143  \text{b$_{16}$}-0.113  \text{b$_{17}$}\\&-0.078  \text{b$_{19}$}-0.152  \text{b$_{3}$}+0.205  \text{b$_{48}$}\\
&-0.162  \text{b$_{49}$}-0.125  \text{b$_{51}$}+0.250 I,\\
\text{P(A$_1=-1$, E$_0=+1$)}&=0.197 \text{b$_{1}$}-0.143 \text{b$_{16}$}-0.113 \text{b$_{17}$}\\&-0.078\text{b$_{19}$}+0.152 \text{b$_{3}$}-0.205 \text{b$_{48}$}\\
&-0.162 \text{b$_{49}$}-0.125 \text{b$_{51}$}+0.250 I,\\
\text{P(A$_1=-1$, E$_0=-1$)}&=-0.197 \text{b$_{1}$}-0.143 \text{b$_{16}$}+0.113 \text{b$_{17}$}\\&+0.078 \text{b$_{19}$}-0.152 \text{b$_{3}$}-0.205 \text{b$_{48}$}\\
&+0.162 \text{b$_{49}$}+0.125 \text{b$_{51}$}+0.250 I.
\label{eq:decomposition2}
\end{aligned}
\end{equation}

where $\text{b}{_i}=\text{tr}(\rho \text{B}{_i})$ with $\text{B}{_i}$ are the
Pauli operators. The order of $\text{B}{_i}$ is in the four-base subscript and
the base-four notation, 0, 1, 2, 3 can be directly mapped to either identity or
Pauli $x$, $y$, and $z$ matrices. For example $\text{B}{_3}$  has the form $I\otimes
I\otimes \sigma_z$ where $ I, \sigma_z$ are identity and Pauli $z$ 
matrices, respectively.
Similarly we can find the forms of other
$\text{B}{_i}$ (details are given
in~\cite{singh-pra-18}).

%

\end{document}